\documentclass[conference]{IEEEtran}
\IEEEoverridecommandlockouts

\usepackage{algorithm}
\usepackage{algorithmicx}
\usepackage{algpseudocode}
\usepackage{amsmath}
\usepackage{indentfirst}
\floatname{algorithm}{Algorithm}

\usepackage{hyperref}  
\usepackage{cite}
\usepackage{amsmath,amssymb,amsfonts}

\usepackage{graphicx}
\usepackage{textcomp}
\usepackage{xcolor}
\usepackage{epstopdf}
\usepackage{subfigure}
\usepackage{booktabs}
\usepackage{amsthm}
\usepackage{enumerate}
\usepackage{subfigure}

\usepackage{xcolor}
\algnewcommand\BlueKeyword{\textcolor{blue}}

\newtheorem{definition}{Definition}

\newtheorem{lemma}{Lemma}
\newtheorem{theorem}{Theorem}

\def\BibTeX{{\rm B\kern-.05em{\sc i\kern-.025em b}\kern-.08em
    T\kern-.1667em\lower.7ex\hbox{E}\kern-.125emX}}
\epstopdfsetup{outdir=./}

\begin{document}
\title{Bi-LSTM based Multi-Agent DRL with Computation-aware Pruning for Agent Twins Migration in Vehicular Embodied AI Networks}
\author{Yuxiang Wei, Zhuoqi Zeng, Yue Zhong, Jiawen Kang*, Ryan Wen Liu, M. Shamim Hossain
\thanks{
Yuxiang Wei, Zhuoqi Zeng, Yue Zhong, Jiawen Kang are with the School of Automation, Guangdong University of Technology, Guangzhou 510006, China (e-mail: 3122001501@mail2.gdut.edu.cn; 3123001489@mail2.gdut.edu.cn; 2112404106@mail2.gdut.edu.cn; kavinkang@gdut.edu.cn).

Ryan Wen Liu is with the School of Navigation, Wuhan University of Technology, Wuhan 430063, China, and also with the State Key Laboratory of Maritime Technology and Safety, Wuhan 430063, China (e-mail: wenliu@whut.edu.cn).

M. Shamim Hossain is with the Department of Software Engineering, College of Computer and Information Sciences, King Saud University, Riyadh 12372, Saudi Arabia (e-mail: mshossain@ksu.edu.sa).

(\textit{*Corresponding author: Jiawen Kang}).
}
}

\maketitle

\begin{abstract}
With the advancement of large language models and embodied Artificial Intelligence (AI) in the intelligent transportation scenarios, the combination of them in intelligent transportation spawns the Vehicular Embodied AI Network (VEANs). In VEANs, Autonomous Vehicles (AVs) are typical agents whose local advanced AI applications are defined as vehicular embodied AI agents, enabling capabilities such as environment perception and multi-agent collaboration. Due to computation latency and resource constraints, the local AI applications and services running on vehicular embodied AI agents need to be migrated, and subsequently referred to as vehicular embodied AI agent twins, which drive the advancement of vehicular embodied AI networks to offload intensive tasks to Roadside Units (RSUs), mitigating latency problems while maintaining service quality. Recognizing workload imbalance among RSUs in traditional approaches, we model AV-RSU interactions as a Stackelberg game to optimize bandwidth resource allocation for efficient migration. A Tiny Multi-Agent Bidirectional LSTM Proximal Policy Optimization (TMABLPPO) algorithm is designed to approximate the Stackelberg equilibrium through decentralized coordination. Furthermore, a personalized neural network pruning algorithm based on Path eXclusion (PX) dynamically adapts to heterogeneous AV computation capabilities by identifying task-critical parameters in trained models, reducing model complexity with less performance degradation. Experimental validation confirms the algorithm’s effectiveness in balancing system load and minimizing delays, demonstrating significant improvements in vehicular embodied AI agent deployment.

\end{abstract}

\begin{IEEEkeywords}
Digital twins, embodied AI, Stackelberg game, pruning techniques, deep reinforcement learning.
\end{IEEEkeywords}

\section{Introduction}
Embodied Artificial Intelligence (AI) is an innovative paradigm in AI that integrates perception, reasoning, and interactions within physical environments. In recent years, large models have driven notable progress in embodied AI \cite{Hong_2024_CVPR}, \cite{lin2024embodied}. Due to their superior sim-to-real adaptability, the large models have become “brains” of many embodied AI networks and systems \cite{NEURIPS2023_ee6630dc}. They also use few-shot learning to fit the embodied AI networks with real transportation scenes \cite{song2023llm}. The few-shot learning is a framework that trains the models to satisfy the users’ habits and requirements with small-scale samples. Specially, for intelligent transportation scenarios, Autonomous Vehicles (AVs), acting as vehicular embodied agents, interact with various on-board devices and roadside infrastructures to build Vehicular Embodied AI Networks (VEANs) \cite{sharma2024artificial, 10700687}. However, in VEANs, resource constraints of the AVs limit their ability to run latency-sensitive embodied AI applications, such as self-driving navigation and motion planning, which are defined as vehicular embodied AI agents \cite{10798474}. These applications should be offloaded to nearby resource-sufficient Roadside Units (RSUs) and managed through digital twin technologies. After offloading, the RSUs segment resources to generate the Vehicular Embodied Agent AI Twins (VEAATs) that establish AV-application mapping while simultaneously executing vehicular embodied AI tasks \cite{zhong2025generative}. Thus, the AVs can receive real-time feedback of their VEAATs from RSUs.

To ensure continuous high-quality services for AVs, the VEAATs on the current RSUs need to be migrated to the next RSUs decided by AVs because of the limited coverage of the current RSUs and dynamic mobility of the AVs \cite{zhong2023blockchain}. Considering the hotspots traffic conditions, the high-density resource demands from vehicles cause workload imbalances among RSUs, risking overloaded issues \cite{10533222}. To address these challenges, existing works exploit game theory \cite{kang2024metaverses}, contact theory \cite{zhong2025generative}, and learning based methods \cite{zhang2023learning} to optimize resource allocation and balance workload issues. However, these works ignore the dynamic interactions among multiple resource providers and requesters. And the Quality of Services (QoS) has not been considered adequately.

To address these challenges, we design a hierarchical bandwidth allocation strategy that is critical to balance AVs and RSUs, where RSUs initially optimize resource allocation while AVs adapt RSUs' strategies to select the optimal RSUs for VEAAT migrations and prioritize user experience metrics. This coordination enables dynamic adjustments through continuous agent-environment feedback loops. Considering the dynamic interaction between AVs and RSUs, we formulate a Multi-Leader Multi-Follower (MLMF) Stackelberg game between them. In this game, the RSUs act as leaders to reduce bandwidth pressure, while the AVs act as followers that aim to request enough bandwidth to minimize latency.

To enhance the model performance in complex problems, researchers have increasingly adopted advanced Deep Reinforcement Learning (DRL) algorithms, such as a multi-attribute double dutch auction-based mechanism with a DRL-based auctioneer \cite{tong2024deep}, an incentive mechanism with the multi-dimensional contract theory \cite{zhong2025generative}, and a consortium blockchain and federated multi-agent DRL integrated framework \cite{abishu2024blockchain}. However, these existing works often neglect the high computation cost, resulting in excessive usage of AVs’ computation resources and significant latency. Considering the complexity of the Stackelberg game and the temporal continuity of the interactions between AVs and RSUs, we employ Multi-Agent Bidirectional Long Short-Term Memory Proximal Policy Optimization (MABLPPO), a Multi-Agent Deep Reinforcement Learning (MADRL) algorithm that incorporates Bidirectional Long Short-Term Memory (Bi-LSTM) networks, to identify the equilibrium in the Stackelberg game \cite{huang2015bidirectional}. The MABLPPO can use the environment simulator to train the model instead of using a very small number of labeled examples. It can enhance the performance of models and generalization ability in resource optimization tasks. This algorithm divides AVs and RSUs into two parts. Both can optimize their decisions based on their local observations, including the environmental changes and actions. Additionally, to ensure that AVs can execute their actions within a limited time and reduce the computation resources allocated from AVs, we employ the Path eXclusion (PX) pruning algorithm to accelerate the processing \cite{iurada2024finding}. With computation-aware pruning, each AV can maintain real-time performance to handle different problems in a dynamic transport environment.

Considering the above analysis, we propose a Tiny MABLPPO (TMABLPPO) algorithm to approximate the Stackelberg Equilibrium (SE). The main contributions of this paper are summarized as follows:

\begin{itemize}
\item Considering the competitive interactions between AVs and RSUs, we formulate a multi-leader multi-follower Stackelberg game between them to optimize bandwidth resources. Specially, we integrate QoS metrics into utility functions, enabling quantitative modeling of service-level satisfaction and dynamic equilibrium analysis in vehicular embodied AI networks.
\item To effectively capture bidirectional temporal dependencies in interactions and improve the strategic performance of AVs and RSUs in the Stackelberg game, we propose a network named MABLPPO algorithm. Bi-LSTM algorithm perceives prev-and-post environmental state changes to enhance robustness in the VEAAT migration.
\item We design a computation-aware pruning algorithm based on PX pruning algorithm to address the demand for low latency and the difference in AVs’ computation resources, accelerating AVs and RSUs to make optimal decisions through MABLPPO. In contrast to the existing approaches, this algorithm innovates to find a balance between accuracy and latency.
\end{itemize}

The rest of this paper is structured as follows. Section \ref{related_works} discusses the related works. Section \ref{system_model} presents the system model. Section \ref{equilibrium_analysis} shows the equilibrium analysis for the MLMF Stackelberg game. Section \ref{MABLPPO_with_pruning} details the Bi-LSTM-based MADRL algorithm with pruning. Section \ref{numerical_results} demonstrates numerical results, and the conclusion is presented in Section \ref{conclusion}.

\begin{figure*}
    \centering
    \includegraphics[width=0.75\linewidth]{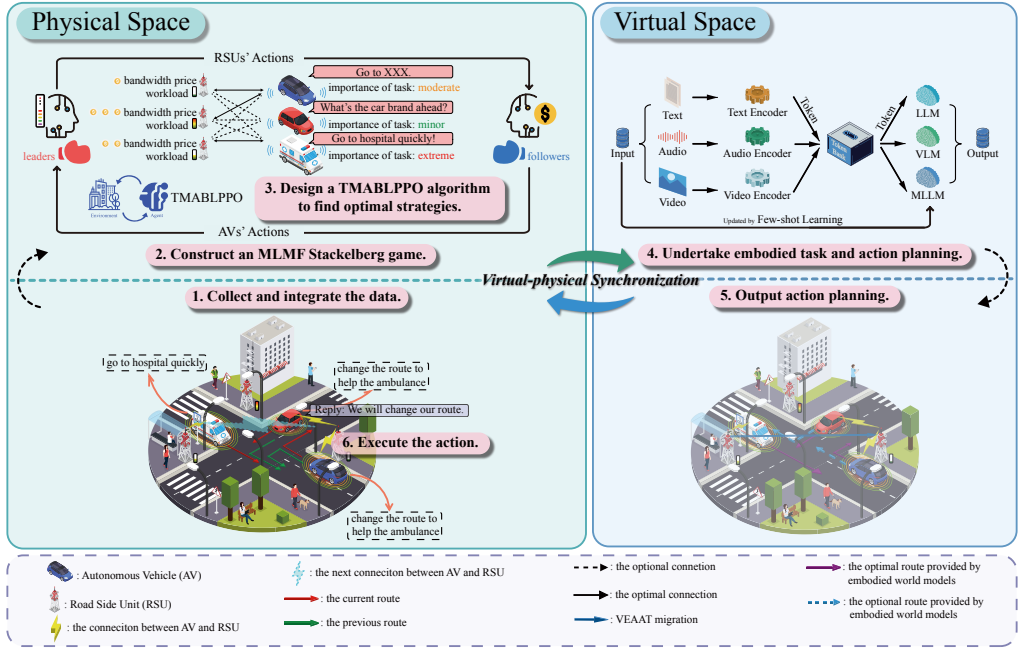}
    \caption{The system model for VEAAT migration. }
    \label{systemfig}
\end{figure*}

\section{Related Works} \label{related_works}
\subsection{Embodied AI in Vehicular Networks}

The rapid progress in computing capabilities and large-model technologies has driven interest in integrating embodied AI into vehicles, particularly for autonomous driving, as noted by Wang et al. \cite{wang2022conversational}. Embodied AI enables vehicles to perceive and interact with the physical world. For instance, embodied AI enhances human-computer interaction in vehicles, enabling voice-based control of vehicle functions. Liu et al. \cite{liu2024aligning} highlighted the potential of embodied AI to improve artificial general intelligence through cognitive capabilities. Moreover, Zhou et al. \cite{zhou2024embodied} further demonstrated its applications in activity prediction and situational analysis. However, in the context of the VEAN, the high computation demands of vehicular embodied AI may not be handled on the AVs. Hence, it is necessary to offload vehicular embodied AI tasks to RSUs for high-performance edge computing, but existing studies have yet to address migration challenges for seamless deployment.

Due to the different traffic conditions in temporal contexts, the embodied AI in vehicular networks should dynamically extract the features of intelligent transportation. Sural et al. \cite{sural2024contextvlm} found that the ContextVLM had a better performance with few-shot learning, which showed that the few-shot approach with VLMs is useful for generalization with a small number of annotated examples. Song et al. \cite{song2023comprehensive} use few-shot learning to enable AV to identify and respond to new traffic scenarios with minimal or even no prior data. Similarly, Chen et al. \cite{chen2024fine} highlighted the growing adoption of few-shot learning due to their proven efficacy in fine-tuning pre-trained models for specialized downstream tasks in intelligent transportation systems. To enhance the robustness of embodied AI in intelligent transportation systems, we incorporated few-shot learning capabilities to enable dynamic adaptation to evolving different intelligent transportation conditions.

\subsection{Service Migration with Stackelberg Game}

Service migration presents challenges in multi-party systems that require efficient resource scheduling and allocation, which is also an NP-hard problem \cite{sun2016primal}. Game theories, particularly the Stackelberg game, have been employed to construct a mathematical model to approximate the optimal solution in service migration. The Stackelberg game can effectively optimize the strategies of leaders and followers, making it well-suited for service migration scenarios. For example, Kang et al. \cite{kang2024metaverses} used the Stackelberg game to improve data transfer efficiency between Metaverse Service Providers and Metaverse Resource Providers to enhance the vehicular twin migration. Chen et al. \cite{10185562} proposed a framework for migration services, categorizing scenarios into urban areas (high-density networks) and remote areas (sparse connectivity) to address heterogeneous environmental constraints. Similarly, Zhang et al. \cite{zhang2025stackelberg} utilized the Stackelberg framework to optimize resource allocation in the collaborative intelligent transportation systems. The above works highlight the resilience of Stackelberg game theory for hierarchical decision-making, especially for dealing with constrained resources and multiple players. With the continued progress in computing, networking, and the Internet of Things technologies, the Stackelberg game is anticipated to play an increasingly vital role in service migration. With the development of the VEAN, there will be an increasing number of AI applications spanning diverse domains, each exhibiting varying levels of importance and distinct latency requirements. However, recent studies overlooked the varying importance of tasks or the distinct latency requirements associated with each task while migrating tasks.

\subsection{Model Compression for DRL}
Due to the discounted reward mechanisms of DRL, researchers usually use DRL to solve resource optimization models. To address computation challenges in deploying DRL models on resource-constrained devices, pruning has emerged as a pivotal technique for eliminating redundant neurons and weights while retaining performance. Kang et al. \cite{kang2024tiny} introduced structured pruning by evaluating neuron importance to reduce the actor network size. Livne et al. \cite{livne2020pops} proposed Policy Pruning and Shrinking (PoPS), leveraging weight rankings to effectively remove low-impact connections and compress models. Further advancing adaptive sparsity, Camci et al. \cite{camci2022qlp} integrated deep Q-learning to determine layer-wise sparsity ratios, enabling unstructured magnitude-based pruning dynamically. These approaches reduce computation complexity and storage demands, making DRL feasible for embedded systems and edge devices. Complementary methods like knowledge distillation. Wang et al. \cite{wang2021knowledge} proposed a Knowledge Distillation-based Cooperative Reinforcement Learning framework for offering connectivity flexibility in dynamic unmanned aerial vehicle networks.

Model compression algorithms like pruning and knowledge distillation compress models while retaining most of their original performance. These are particularly advantageous for DRL tasks that operate in traffic settings with low-latency requirements. The combination of pruning, which eliminates unnecessary neurons, and knowledge distillation, which is the distillation of essential knowledge into small models, enables DRL systems to be effectively implemented in low-computation-resource settings without hugely affecting efficiency and robustness in decision-making. To address practical deployment constraints, unlike prior fixed compression strategies, we specifically find pruning ratios to match the heterogeneous computation capabilities of individual AVs, ensuring adaptability across diverse hardware environments.

\section{System Model} \label{system_model}

\subsection{Migration Model in Embodied AI}
Due to limited onboard computing resources, the AVs cannot tackle all tasks locally. To alleviate the workload of AVs, the AVs need to offload tasks to nearby RSUs \cite{10185562}. Besides, integrating multi-source sensory data from surrounding AVs enhances the precision of vehicle control and enables cooperative vehicle-road coordination. Thus, the AVs need to find the optimal RSUs to connect and interact with the embodied world models, enabling real-time synchronization between physical and virtual spaces. Furthermore, the dynamic nature of the traffic environment demands adaptive and efficient resource allocation strategies for enabling the real-time execution of tasks. The optimization of RSU selection enables AVs to minimize communication latency, thereby improving overall system performance \cite{zhong2025generative}. In this paper, we utilize the Stackelberg game to address resource optimization between AVs and RSUs scenarios. Figure \ref{systemfig} illustrates the migration workflows while details are described below.

\textbf{Step 1. Collect and integrate the data}:
As illustrated in Fig. 1, the AV operates on the road while continuously collecting multimodal sensory data (e.g., LiDAR, camera, and radar feeds). Simultaneously, it processes the user’s operational requirements. Before data transmission, the AV aggregates and formats the data according to the specifications of embodied world models, preparing it for transfer to the RSU.

\textbf{Step 2. Construct an MLMF Stackelberg game}:
In the Stackelberg framework, RSUs first set bandwidth prices using historical demand patterns, while AVs then calculate bandwidth needs based on current pricing and select cost-performance-optimized RSUs to establish connections. This iterative bandwidth-demand adjustment establishes an MLMF Stackelberg game in VEANs.

\textbf{Step 3. Design a TMABLPPO algorithm to find optimal strategies}:
To find the optimal strategies for AVs' and RSUs' continuous adjustments in the Stackelberg game, we designed the TMABLPPO algorithm in which Bi-LSTM modules can enhance the dynamical resource allocation ability, so that we can find the Stackelberg equilibrium. TMABLPPO includes dual actor models for the AVs and the RSUs, respectively. The AVs and RSUs execute the specific actor model to generate bandwidth selling prices and bandwidth requests, respectively. Then, the AVs choose optimal RSUs to establish connections and migrate VEAATs to target RSUs.

\textbf{Step 4. Undertake embodied task and action planning}:
After establishing real-time connections, AVs transmit the vehicular embodied AI agents' data to RSUs. With this data, RSUs build VEAATs by their computation resources, storage resources, etc. The VEAATs use the embodied world model with spatial awareness and long-horizon extrapolation proficiencies to analyze information such as AVs’ sensory data and user requirements. Consequently, the VEAATs generate the optimal action plans for AVs. The few-shot learning dynamically weights cross-modal features in order to enhance the performance of the embodied world models and their robustness in different states.

\textbf{Step 5. Output action planning}:
VEAATs pack planning routes and human-machine interface feedback. RSUs transmit this data using real-time connections with AVs.

\textbf{Step 6. Execute the action}:
AVs receive action planning from RSUs, provide feedback to users, and execute these plans to achieve a seamless in-vehicle experience.

\subsection{Latency Model}

The sets of AVs and RSUs are denoted as $\mathcal{R}=\left\{1,\ldots,r,\ldots R\right\}$ and $\mathcal{V}=\left\{1,\ldots,v,\ldots V\right\}$, respectively. Furthermore, we consider that the RSUs have limited bandwidth and computation resources, and the AVs must select an appropriate RSU and a proper amount of bandwidth for data transfer. Accordingly, we set the transmission task as $J_{rv}=\left\{D_{rv}, T_v^{max}{,\alpha}_{rv}\right\}$, where $D_{rv}$ is the data size of the task, $T_v^{max}$ is the maximum delay tolerance for AV $v$, and $\alpha_{v}$ is the task importance of AV $v$ currently, which depends on its task type. For instance, emergency tasks (e.g., ambulances on duty and fire trucks on duty) are more important than normal tasks (e.g., calculating the routes and the smart cabin services). Given the strict latency and ultra-high reliability of the data transfer task, the 6G communications and the orthogonal time frequency space modulation technique are essential, which can provide outstanding performance in high-mobility scenarios \cite{shen2023five}, \cite{xiao2021overview}. We consider the latency of channel estimation, which is an important part of the orthogonal time frequency space modulation. We set the signal process speed of RSUs as $f_{signal}$ and the iterations of channel estimation as $k$ \cite{yuan2021data}. Hence, the latency of the process of channel estimation can be calculated as 
\begin{equation}
    T_{channel} = \frac{kD_{rv}}{f_{signal}}.
\end{equation}
Moreover, the implementation of multiple-input multiple-output technology facilitates the concurrent transmission of multiple signals while enabling the efficient allocation of bandwidth among multiple AVs. The amount of the bandwidth purchased by the AV from the RSU is $b_{rv}$, and the data transfer rate can be calculated as 
$r_{rv}=b_{rv}\log_2{\left(1+\frac{\rho h d_{rv}^{-\varepsilon_0}}{\sigma^2}\right)}.$
Here $\rho$ represents the transmitter power of the AV, $h$ represents the unit channel power gain, $d_{rv}$ is the distance between RSU $r$ and AV $v$, $\varepsilon_0$ represents the path-loss coefficient, and $\sigma^2$ is the additive white Gaussian noise power in the communication link \cite{shannon1948mathematical}. The delay of the data transfer task can be calculated by $T_{rv}=\frac{D_{rv}}{r_{rv}}$. Therefore, the total latency is
\begin{equation}
    T_{total} = T_{rv} + T_{channel}.
\end{equation}

\subsection{MLMF Stackelberg Game between AVs and RSUs}

\subsubsection{Utility models of AVs and RSUs}
Building upon the QoS-aware revenue framework, the RSU-AV pairing mechanism operates under a probabilistic decision-making framework \cite{zhang2023learning}, where AVs dynamically evaluate various indicators to select optimal RSU connections. The probability of pairing function between RSU $r$ and AV $v$ is formulated as

\begin{equation}
\theta_{rv}=\alpha_{v}\frac{\frac{1}{p_{r}}}{\sum_{l\in\mathcal{R}}\frac{1}{p_{l}}},
\end{equation}
where $\alpha_{v}$ is the task important parameter for AV $v$, $p_{r}$ is the bandwidth price determined by RSU $r$. From the AVs' perspective, the targets are to minimize the transmission latency and the bandwidth cost.

The utility function of the AV consists of two parts: 
\begin{enumerate}
    \item [i)] {The first part we consider is the revenue function of AV $v$. Based on the user experience analysis, we formulate the QoS-oriented revenue function that integrates bandwidth allocation efficiency and latency-sensitive operational constraints. By integrating the Weber-Fechner Law that characterizes the logarithmic nature of the human-centric service perception in service quality evaluation, we incorporate a logarithmic component into the revenue function \cite{fechner1860elemente}. Hence, we obtain the revenue function of AV $v$ with the QoS as 

    \begin{equation}
    f_{rev} = \beta ln (e + \frac{\alpha_{v} b_{rv}}{T_{v}^{max}} ),
    \end{equation}
    where $\beta$ is the marginal effect parameter for the human-centric service perception. We also incorporate the Euler number to ensure the strict positivity of  $f_{rev}$.}
    \item [ii)]  Additionally, the second part is the cost function of AV $v$. The cost function is formulated as $f_{cost} = p_{r}b_{rv}$, where $p_r$ is denoted as the bandwidth price of RSU $r$.
\end{enumerate}
Integrating the revenue function and the cost function, we design the utility function of AV $v$ with the social effect as

\begin{equation}
U^F_v=\sum_{r\in \mathcal{R}}\big[\alpha_{v}\frac{\frac{1}{p_{r}}}{\sum_{l\in\mathcal{R}}\frac{1}{p_{l}}}\beta ln (e + \frac{\alpha_{r} b_{rv}}{T_{v}^{max}} )-p_{r}b_{rv}\big].
\end{equation}

Relating to the AVs' strategies, the RSU's utility function is based on the bandwidth amount that AVs bought. According to the net profit function, we assume that RSU $r$ has its base cost, which is the resource reservation overhead required to maintain competitive service quality in the game, and is denoted as $c_r$. Hence, the utility of RSU $r$ is denoted as 

\begin{equation}
U^L_r=\sum_{v\in\mathcal{V}}\big[\alpha_v\frac{\frac{1}{p_{r}}}{\sum_{l\in\mathcal{R}}\frac{1}{p_{l}}}(b_{rv}p_{r}-b_{rv}c_{r})\big].
\end{equation}

For simplicity, all strategies of AVs and RSUs are represented as vectors $\boldsymbol{B}=\{b_v\}_{v\in\mathcal{V}}$ and $\boldsymbol{P}=\left\{p_r\right\}_{r\in\mathcal{R}}$, respectively. The strategies of AVs excluding AV $v$ and RSUs excluding RSU $r$ are expressed as
$\boldsymbol{B_{-v}}=\{b_{v'}\}_{v'\in\mathcal{V}{\backslash}v}$ and $\boldsymbol{P_{-r}}=\{p_{r'}\}_{r'\in\mathcal{R}{\backslash}r}$.

\subsubsection{Stackelberg game formulation}
The Stackelberg Game establishes a hierarchical decision-making framework for optimizing bandwidth resource allocation between AVs and RSUs. The RSUs acting as leaders can analyze the previous games and the current state to decide the bandwidth price policy. After RSUs offer the price information, AVs acting as followers analyze the latency of the task and the bandwidth price of each RSU to figure out the bandwidth amount they need. Once the price of the bandwidth provided by RSUs has been determined, the follower-level problem can be formulated as follows,

\begin{equation}
    \begin{split}
    \textbf{\textit{P1:}}\:&\max\limits\:U^F_v(b_v, \boldsymbol{B^*_{-v}}, \boldsymbol{P}),  \\
    &\:\:s.t.\:\: {b_{rv} \geq 0}, \sum_{j\in\mathcal{R}}{T_{rv} \leq T^{max}_i}.
    \end{split}
    \label{problem1}
\end{equation}

At the leader level, the RSU employs a pricing strategy to influence the amount of bandwidth purchased by AVs to maximize its net profit. Accordingly, the leader-level problem is formulated as follows
\begin{equation}
    \begin{split}
    \textbf{\textit{P2:}}\:&\max\limits\:U^L_r(p_r, \boldsymbol{P^*_{-r}}, \boldsymbol{B}),  \\
    &\:\:s.t.\:\: {p_{rv}\in[c_r, p^{max}]},
    \end{split}
    \label{problem2}
\end{equation}
where $p^{max}$ is the maximum selling price, which enhances the stability of interactions between the RSU and the AV, thereby preventing the RSU from taking undue risks. Consequently, \textbf{\textit{P}}\hyperref[problem1]{\textbf{\textit{1}}} and \textbf{\textit{P}}\hyperref[problem2]{\textbf{\textit{2}}} can be regarded as a unified Stackelberg game \cite{huang2022joint}. The purpose is to find the SE that will yield the perfect outcome for this framework. Within this equilibrium configuration, no unilateral deviation by either RSUs or AVs can Pareto-improve individual payoffs, as all agents operate at consistent best-response strategies. The equilibrium enforces mutual optimality where each agent's utility is maximized conditional on counterparties executing rational best-response strategies aligned with their self-interest. In light of the aforementioned, the SE of our model can be interpreted as the subsequent definition.

\begin{definition}
(SE): \textit{Initially, optimal price strategies of RSUs and optimal bandwidth strategies of AVs are set as $\boldsymbol{p^\ast}=\left\{p_r^\ast\right\}_{r\in\boldsymbol{R}}$ and $\boldsymbol{b^\ast}=\left\{b_v^\ast\right\}_{v\in\boldsymbol{V}}$, respectively. The optimal strategy functions of all other RSUs and AVs except $r$ and $v$ are denoted as $\boldsymbol{B_{-v}^\ast\ }=\left\{b_{v’}^\ast\right\}_{v'\in\mathcal{V}{\backslash}v}$ and $\boldsymbol{P_{-r}^\ast}=\left\{p_{r’}^\ast\right\}_{r'\in\mathcal{R}{\backslash}r}$. Consequently, we can establish a potential stable point of a dynamic adjustment process in which individuals adjust their behavior to that of the other players in the game, searching for strategy choices that will yield superior results. Subsequently, the stable point $(\boldsymbol{b^\ast},\boldsymbol{p^\ast})$ is defined as the SE, which satisfies the following inequalities
\begin{equation}
\begin{cases}
U_v^F\big({b_v^*},\boldsymbol{B_{-v}^*},\boldsymbol{p^*}\big)\geq U_v^F\big({b_v},\boldsymbol{B_{-v}^*},\boldsymbol{p^*}\big),&\forall v\in\mathcal{V}, \\
U_r^L\big({p_r^*},\boldsymbol{P_{-r}^*},\boldsymbol{b^*}\big)\geq U_r^L({p_r},\boldsymbol{P_{-r}^*},\boldsymbol{b^*}),&\forall r\in\mathcal{R}.
\end{cases}
\end{equation}}
\end{definition}

\begin{figure*} 
    \centering
    \includegraphics[width=0.7\linewidth]{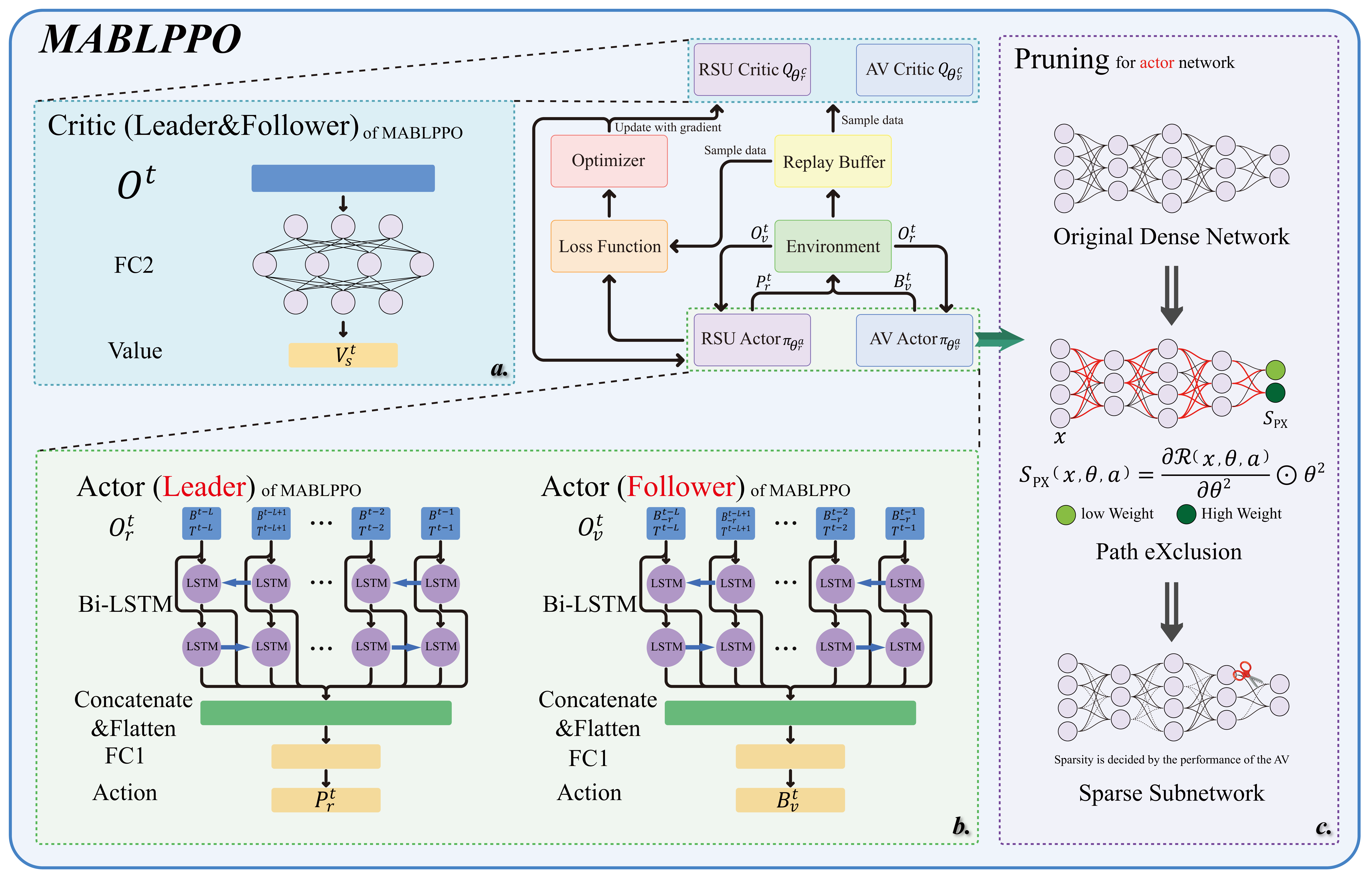}
    \caption{TMABLPPO algorithm's Framework for the VEAAT migration.}
    \label{lstm}
\end{figure*}

\section{Equilibrium Analysis for MLMF Stackelberg Game} \label{equilibrium_analysis}

\subsection{Follower-level Equilibrium Analysis}
At the follower level of the Stackelberg game, each AV $v$ obtains the bandwidth it needs based on the information of tasks and RSUs’ bandwidth prices \cite{zhou2022stackelberg}. To analyze the concavity of the utility function for AV $v$, the first-order and second-order derivatives of $U^F_v$ with respect to $b_{rv}$ are expressed as
\begin{equation}
    \frac{\partial U^{F}_{v}}{\partial b_{v}} =\sum_{r \in \mathcal{R}}\big[\alpha_{v} \beta \frac{\frac{1}{p_{r} } }{\sum _{l\in\mathcal{R}}\frac{1}{p_{l} } }(\frac{\frac{\alpha _{v} }{T^{max} _{v} } }{ e+\frac{\alpha_{v} b_{rv} }{T^{max} _{v} }}-p_{r})\big],
\end{equation}

\begin{equation}
    \frac{\partial^2 U^F_{v}}{\partial b^2_{v}}=\sum_{r \in \mathcal{R}}\big[-\alpha_v \beta \frac{\frac{1}{p_r}}{\sum_{l \in \mathcal{R}\frac{1}{p_l}}}\frac{{(\frac{\alpha_v}{T^{max}_v})^2}}{(e+\frac{\alpha_v b_{rv}}{T^{max}_v})^2}\big]< 0.
\end{equation}

Because the second-order derivative of the follower utility function is negative, $U^F_v$ is quasi-concave in $b_{rv}$. Hence, the maximum exists for the first-order derivative to equal zero. Hence, the maximum of the follower’s utility function exists and satisfies the first-order optimality condition $\frac{\partial U^F_v}{\partial b_{rv}}=0$,
\begin{equation} \label{response_follwers}
    \hat{b_{rv}}=\frac{1}{p_r} -\frac{eT^{max}_{v}}{\alpha_v},
\end{equation}
where $\hat{b_{rv}}$ represents AV $v$ unique optimal strategy that is calculated by the first-order derivative. The bandwidth amount $b_{rv}$ follows the constraint $b_{rv} \in [0, b^{max}_{rv}]$. Hence, the optimal strategy for AV $v$ is
\begin{equation}
    b^\star_{rv}=\begin{cases}
    \hat{b_{rv}}, &\mathrm{if} \ \alpha_v\geq ep_rT^{max}_v,\\
    0, &\mathrm{if} \ \alpha_v < ep_rT^{max}_v . 
    \end{cases} 
\end{equation}

\subsection{Leader-level Equilibrium Analysis}
In a Stackelberg game, leaders and followers interact such that the leaders’ optimal strategies are influenced by the followers’ actions. Consequently, analyzing the leaders’ strategies requires incorporating the followers’ best-response strategies, i.e., substituting Eq. (\ref{response_follwers}) into Eq. (\ref{utility_leaders}). 

Consequently, analyzing the leaders' strategies requires incorporating the followers' optimal strategies. Thus, we embed the followers' optimal strategies from Eq. (\ref{response_follwers}) into leaders' strategies. We denoted $y_r=\frac{1}{p_r}$ to simplify the mathematical equation. Hence, the utility function of leader $r$ is 
\begin{equation} \label{utility_leaders}
    \begin{split}
    U^{L}_r &= \sum _{v\in \mathcal{V} }\big(\alpha_v\frac{\frac{1}{p_r} }{\sum _{l\in \mathcal{R}\frac{1}{p_l} }}(b_{rv}p_r-b_{rv}c_r)\big), \\ 
    &= \sum _{v\in \mathcal{V} }\big(\alpha_v\frac{y_r}{\sum _{l\in \mathcal{R}}y_l}(y_r -\frac{eT^{max}_{v}}{\alpha_v})(\frac{1}{y_r}-c_r)\big).
    \end{split}
\end{equation}

\begin{lemma}
A function $\mathcal{H}_r(\boldsymbol{B})$ is a standard function if and only if it satisfies the following three conditions:
\begin{itemize}
\item Positiveness: $\mathcal{H}_r(\boldsymbol{B})>0$.
\item Monotonicity: $\forall \boldsymbol{B'} > \boldsymbol{B}, \mathcal{H}_r( \boldsymbol{B'})> \mathcal{H}_r(\boldsymbol{B})$.
\item Scalability: $\forall \lambda > 1, \lambda \mathcal{H}_r(\boldsymbol{B})> \mathcal{H}_r(\lambda \boldsymbol{B})$.
\end{itemize}
where $\mathcal{H}_r(\boldsymbol{B})$ denotes the optimal strategy of RSU $r$.
\end{lemma}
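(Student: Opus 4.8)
The plan is to first make the function $\mathcal{H}_r$ explicit and then check the three conditions one at a time. Starting from the leader utility in Eq.~(\ref{utility_leaders}), into which the follower best response of Eq.~(\ref{response_follwers}) has already been substituted, I would differentiate $U^L_r$ with respect to the leader's own decision variable $y_r=1/p_r$, set $\partial U^L_r/\partial y_r=0$, and solve the resulting stationarity condition for $y_r$ (equivalently $p_r$). Because $U^L_r$ is a smooth rational expression in $y_r$ in which the competitors' strategies enter only through the normalizing denominator $\sum_{l\in\mathcal{R}}y_l$, the first-order condition yields a closed-form optimal strategy $\mathcal{H}_r$ of RSU $r$ as a function of the remaining profile. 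The lemma then reduces to verifying positiveness, monotonicity, and scalability for this expression on the admissible range $p_r\in[c_r,p^{max}]$.

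Positiveness and monotonicity I expect to be the routine steps. For positiveness, I would argue that every factor appearing in $\mathcal{H}_r$ --- the task weights $\alpha_v$, the tolerances $T^{max}_v$, the reservation cost $c_r$, and the competition term $\sum_{l\in\mathcal{R}}y_l$ --- is strictly positive on the feasible range, and that the feasibility restriction $\alpha_v\geq e p_r T^{max}_v$ inherited from the follower stage keeps the bracketed quantities positive, so $\mathcal{H}_r(\boldsymbol{B})>0$. For monotonicity, I would compute the partial derivative of $\mathcal{H}_r$ with respect to each competitor component $b_{v'}$ and show it is nonnegative: intuitively, when rivals bid more aggressively the optimal response of RSU $r$ shifts upward, which follows from the sign of $\partial \mathcal{H}_r/\partial b_{v'}$ once the denominator dependence is differentiated out, so $\boldsymbol{B'}>\boldsymbol{B}$ forces $\mathcal{H}_r(\boldsymbol{B'})>\mathcal{H}_r(\boldsymbol{B})$.

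The hard part will be scalability, i.e.\ showing $\lambda\mathcal{H}_r(\boldsymbol{B})>\mathcal{H}_r(\lambda\boldsymbol{B})$ for every $\lambda>1$. I would handle it by fixing $\boldsymbol{B}$ and studying the auxiliary function $g(\lambda)=\lambda\mathcal{H}_r(\boldsymbol{B})-\mathcal{H}_r(\lambda\boldsymbol{B})$, noting $g(1)=0$ and then proving $g'(\lambda)>0$ for $\lambda>1$; equivalently, I would show that $\mathcal{H}_r(\lambda\boldsymbol{B})/\lambda$ is strictly decreasing, i.e.\ that $\mathcal{H}_r$ is strictly sub-homogeneous of degree one. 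The obstacle is that the optimal strategy is a ratio in which $\lambda$ scales a linear numerator piece while a constant offset of the form $eT^{max}_v/\alpha_v$ remains unscaled; the strict inequality hinges on that offset being positive, so the argument must isolate it and show it forces sub-homogeneous rather than exactly homogeneous behaviour. Once the three conditions hold, $\mathcal{H}_r$ qualifies as a standard function, and invoking Yates' fixed-point theorem for standard functions delivers the existence and uniqueness of the Stackelberg equilibrium, which is the ultimate purpose of the lemma.
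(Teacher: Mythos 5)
Your proposal is essentially the paper's own argument, which appears not under the lemma itself (the lemma is stated as the standard-function characterization without a separate proof) but inside the proof of Theorem~1: derive the leader's best response from the first-order condition on $U^L_r$ in the variable $y_r=1/p_r$, then verify positiveness, monotonicity, and scalability before invoking the standard-function fixed-point result for uniqueness --- with the one caveat that the paper checks monotonicity with respect to the other leaders' strategies through $W=\sum_{l\in\mathcal{R}\backslash r}y_l$, whereas you propose differentiating in the follower components $b_{v'}$, a mismatch the lemma's own notation $\mathcal{H}_r(\boldsymbol{B})$ invites. Your scalability argument via $g(\lambda)=\lambda\mathcal{H}_r(\boldsymbol{B})-\mathcal{H}_r(\lambda\boldsymbol{B})$, $g(1)=0$, $g'(\lambda)>0$ is a minor variant of the paper's direct computation that $\beta\mathcal{G}(\boldsymbol{Y})-\mathcal{G}(\beta\boldsymbol{Y})>0$, and you correctly identify the unscaled offset $eT^{max}_v/\alpha_v$ as the reason the best response is strictly sub-homogeneous rather than exactly homogeneous.
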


\begin{theorem} \label{theorem1}
The unique SE denoted as $(\boldsymbol{B^\star}, \boldsymbol{P^\star})$ is established in the formulated game when the following condition holds, i.e., $\alpha_v\geq ep_rT^{max}_v$. In this case, AVs’ bandwidth amount strategies and RSUs’ bandwidth pricing strategies are both optimal. In particular, this equilibrium ensures the simultaneous optimization in VEANs.
\end{theorem}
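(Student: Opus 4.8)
The plan is to prove existence and uniqueness of the SE by exploiting the hierarchical structure of the game, handling the follower subgame first and then the leader subgame via the standard-function machinery set up in the preceding lemma. At the follower level the work is essentially done: for any fixed price profile $\boldsymbol{P}$, the negativity of $\partial^2 U^F_v/\partial b_v^2$ established above shows that $U^F_v$ is strictly concave in AV $v$'s own bandwidth, so the stationarity condition $\partial U^F_v/\partial b_{rv}=0$ yields the unique interior maximizer $\hat{b}_{rv}$ of Eq. (\ref{response_follwers}). Under the hypothesis $\alpha_v \ge e p_r T^{max}_v$ this interior point is nonnegative, hence feasible, and therefore coincides with the projected optimum $b^\star_{rv}$. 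This pins down a single-valued, well-defined follower best-response map $\boldsymbol{B}^\star(\boldsymbol{P})$, which is exactly the regime in which the theorem is stated.

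Next I would substitute this follower response into each leader's payoff to obtain the reduced-form objective of Eq. (\ref{utility_leaders}), expressed through the change of variable $y_r = 1/p_r$. The central idea is to recast the leaders' game as a fixed-point problem and apply the standard-function result. Concretely, I would first verify that the reduced-form $U^L_r$ is concave in leader $r$'s own price on the admissible interval $[c_r, p^{max}]$ by computing $\partial^2 U^L_r/\partial p_r^2 < 0$, so that the first-order condition $\partial U^L_r/\partial p_r = 0$ uniquely characterizes a best response $\mathcal{H}_r$; then I would check that $\mathcal{H}_r$ meets the three standard-function conditions of the preceding lemma. Positiveness is immediate because prices are bounded below by $c_r > 0$, while monotonicity and scalability would follow by implicit differentiation of the stationarity condition and by bounding the relevant ratios through the structure of the shared coupling term $y_r/\sum_{l} y_l$ in the pairing probability.

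With these conditions verified, the standard-function fixed-point theorem (Yates) guarantees that the joint best-response mapping $\mathcal{H} = (\mathcal{H}_r)_{r\in\mathcal{R}}$ admits a \emph{unique} fixed point $\boldsymbol{P}^\star$, and that best-response iteration converges to it. Pairing this with the unique follower response $\boldsymbol{B}^\star = \boldsymbol{B}^\star(\boldsymbol{P}^\star)$ produces a single profile $(\boldsymbol{B}^\star, \boldsymbol{P}^\star)$ that satisfies both inequalities in the SE definition, delivering existence and uniqueness together and thereby the claimed simultaneous optimality.

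I expect the main obstacle to be the scalability check $\lambda\,\mathcal{H}_r(\boldsymbol{B}) > \mathcal{H}_r(\lambda \boldsymbol{B})$ for $\lambda > 1$. The difficulty is that each leader's price influences its reduced-form utility through two competing channels—its own profit margin $(1/y_r - c_r)$ and the shared normalizing denominator $\sum_{l} y_l$ of the pairing probability—so a naive scaling argument does not close. Controlling how the coupling term behaves under simultaneous scaling of all strategies, and ensuring the interior follower response remains valid throughout (which is precisely where the condition $\alpha_v \ge e p_r T^{max}_v$ must be invoked), is the delicate part of the argument; the concavity and monotonicity steps, by contrast, should reduce to routine sign checks on the derivatives of Eq. (\ref{utility_leaders}).
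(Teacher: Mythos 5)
Your proposal follows essentially the same route as the paper: establish the follower's unique best response from strict concavity under the condition $\alpha_v \ge e p_r T^{max}_v$, substitute it into the leader's utility via the change of variable $y_r = 1/p_r$, show concavity of the reduced objective, and then verify positiveness, monotonicity, and scalability of the resulting best-response map so that the standard-function fixed-point argument yields a unique leader-level equilibrium and hence a unique SE. The only cosmetic difference is that the paper carries out the concavity and scalability checks directly in the $y_r$ variable (with $W=\sum_{l\in\mathcal{R}\setminus r} y_l$) rather than in $p_r$, which is exactly the device you anticipate needing for the scalability step.
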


\begin{proof} 

We have incorporated the followers’ optimal strategies into the leaders’ utility function as described in Eq. (\ref{utility_leaders}). To streamline the following analysis, we denote $W = \sum _{l \in \mathcal{R}/r} y_l$. Then, we analyze the first-order and second-order derivatives of $U^L_r$ with respect to $y_r$,


\begin{equation}
    \frac{\partial U^L_r }{\partial y_r } =\sum_{v \in \mathcal{V}}\alpha_v\frac{-c_r y_r^2-2c_ry_rW +\frac{eT^{max}_v}{\alpha_v}(c_rW+1)+W }{{( \sum _{l \in \mathcal{R}} y_l ) ^2 }},
\end{equation}


\begin{equation}
\begin{aligned}
    \frac{\partial^2 U^L_r }{\partial y^2_r } =-2\sum_{v \in \mathcal{V}}\big(\alpha_v\frac{c_rW^2+\frac{eT^{max}_v}{\alpha _v}(c_rW + 1)+W}{{(\sum _{l \in \mathcal{R}} y_l)^3 }}\big)<0 .
\end{aligned}
\end{equation}
Because the leader utility function is quasi-concave in $b_{rv}$, the maximum exists where the first-order derivative equals zero. To brief the equations, we denote $Z_v = \frac{eT^{max}_v}{\alpha_v}$. If the maximum of the leader's utility function exists and satisfies the first-order optimality condition $\frac{\partial U^L_r}{\partial y_r}=0$, then we can get
\begin{equation}
    \hat{y_{j}}=\sum_{v \in \mathcal{V}}\big(\frac{\sqrt{c_r^2W^2+c_r(Z_v+(1+Z_v c_r )W)}}{c_r}-W\big) >0.
\end{equation}

Because of the constraint $p_{j} \in [c_r, p^{max}]$, the optimal strategy for RSU $r$ is
\begin{equation}
\begin{aligned}
    y^\star_{r}&=\mathcal{G}(\boldsymbol{Y}) \\
    &=\begin{cases}
    \hat{y_{r}}, &\mathrm{if}\: c_rW<\sqrt{c_r^2W^2+c_r(Z_v+(1+Z_v c_r )W}, \\
    0, &\mathrm{if}\: c_rW\geq\sqrt{c_r^2W^2+c_r\big(Z_v+(1+Z_v c_r )W\big)}. 
    \end{cases} 
\end{aligned}
\end{equation}

The leader-level game has a unique Nash Equilibrium if the best response function of RSUs satisfies the standard function properties \cite{xu2021privacy}. Hence, we will prove $\mathcal{G}(\boldsymbol{Y})=y^\star_j$ is the standard function.
\subsubsection{Positiveness}
    It’s brief to prove that the $\mathcal{G}(\boldsymbol{Y})=y^\star_j$ is a positive function.
\subsubsection{Monotonicity}
It’s easy to understand this function $\frac{\partial W}{\partial \boldsymbol{Y}} >0$. Hence, we use the chain rule in the derivatives in two parts: $\frac{\partial \mathcal{G}(\boldsymbol{Y})}{\partial \boldsymbol{Y}} =\frac{\partial \mathcal{G}(\boldsymbol{Y})}{\partial W} \frac{\partial W}{\partial \boldsymbol{Y}}$. The first derivative of $\mathcal{G}(\boldsymbol{Y})$ in terms of $W$ is
\begin{equation}
    \frac{\partial \mathcal{G}(\boldsymbol{Y})}{\partial W}=\sum _{v\in\mathcal{V}}[\frac{2C_rW+1+Z_vc_r}{\sqrt{(2c_rW+1+Z_vc_r)^2-(c_rW-1)^2} } -1]>0.
\end{equation}
Since the first-order derivative of $y^\star_j$ is positive, $y^\star_j$ is a monotonic function.

\subsubsection{Scalability}
We set a parameter denoted as $\beta$. The constraint of it is $\beta \in (1,+\infty)$. $\mathcal{F}_{sca} = \beta\mathcal{G}(\boldsymbol{Y})-\mathcal{G}(\beta\boldsymbol{Y})$ is formulated as

\begin{equation}
\begin{aligned}
    \mathcal{F}_{sca} =&\sum_{v \in \mathcal{V}}\big(\frac{\beta\sqrt{c_r^2W^2+c_r\big(Z_v+(1+Z_v c_r )W\big)} }{c_r} \\
    &-\frac{\sqrt{{c_rW\beta}^2+c_r(Z_v+(1+Z_v c_r )W \beta)} }{c_r}\big) > 0.
\end{aligned}
\end{equation}

\end{proof}

In summary, $\mathcal{G}(\boldsymbol{Y})=y^\star_j$ is a standard function that satisfies the equation $p^\star_j=\frac{1}{y^\star_j}$. Hence, $p^\star_j$ is an optimal response function. The leader-level response function can find the value that maximizes the utility function of leaders. According to Theorem \ref{theorem1}, the mathematical models of leaders and followers can formulate an MLMF Stackelberg game with a unique SE.

\section{Bi-LSTM Based Multi-agent Deep Reinforcement Learning Algorithms with Pruning} \label{MABLPPO_with_pruning}

In the context of the highly intricate data-transferring environment, the DRL algorithm is better positioned to leverage past experience to inform decision-making and rapidly identify a game equilibrium solution than greedy algorithms \cite{kang2024metaverses}. In the pursuit of privacy protection in real-world settings, decentralized algorithms such as DRL can integrate the observation of partial information by cooperative agents in actual environments. Consequently, we transform the model into a partially observable Markov decision process. The framework of TMABLPPO algorithm is illustrated in Fig. \ref{lstm}.

\subsection{Bi-LSTM Based Actor Algorithms}
The Bi-LSTM model exhibits enhanced processing capabilities with respect to contextual information. In the context of DRL, the model is designed to accept the state of previous steps as input, thereby facilitating more effective processing of the observed information. Furthermore, the Bi-LSTM is capable of capturing the inter-agent dependencies within the same observation space, thereby facilitating the generation of more rational actions by each agent. The Bi-LSTM model is based on the traditional LSTM model framework, with the addition of a bidirectional propagation mechanism. The LSTM model is composed of three gates: the input gate, the output gate, and the forget gate. In a Bi-LSTM module, two LSTM chains are employed \cite{huang2015bidirectional}. Each processing a forward or backward sequence, as defined
\begin{equation}
h_{t+1}^{fwd}=LSTM_{forward}(x_t, h_{t}^{fwd}),
\end{equation}
\begin{equation}
h_t^{bwd}=LSTM_{backward}(x_t, h_{t+1}^{bwd}),
\end{equation}
where $h_t^{fwd}$ denotes the hidden state of forward LSTM, $_t^{bwd}$ denotes the hidden state of backward LSTM and $x_t$ denotes the input vector at timestep $t$.

The initial MLP processes the LSTM layer's outputs by transforming high-dimensional sequence features into a decision-optimized feature space through non-linear activation functions like ReLU, which captures intricate data patterns. Subsequent stacked MLP layers progressively extract higher-level abstract features by iteratively refining preceding layer outputs. This enables the network to model complex input-action mappings and enhance the agent capacity to select optimal actions in dynamic states.

\subsection{Multi-agent DRL Algorithms for Stackelberg Game}
In an optimal setting, if the RSU and the AV can access global information, then they will make optimal decisions. However, in light of the growing significance of data privacy concerns, the implementation of robust privacy protection mechanisms has emerged as a pivotal aspect within the domain of artificial intelligence. Accordingly, the model is transformed into a partially-observable DRL, which is defined as follows:
\begin{enumerate}
    \item \textbf{Observation}: Both the RSU and the AVs are privy to only a portion of the real-time information and the historical information of the previous $L$ rounds. In the current time slot, the observation information available to RSU $r$ is the historical price strategy and demand strategy $o_r^t\triangleq\{B^{t-L},P^{t-L},...,B^{t-1},P^{t-1}\}$, which can be used to inform the formulation of a strategy $p_j^t$. Subsequently, AVs $v$ are able to observe the current pricing of the RSU, as well as the historical price strategy and demand strategy $o_v^t\triangleq\{B_{-v}^{t-L},P^{t-L},...,B_{-v}^t,P^t\}$, which enables them to formulate a strategy $b_v^t$. In the case when $t<L$, the absent data are substituted with $B^0$ and $P^0$. To brief expression, we integrate $o^t_r$ and $o^t_v$ as $o^t_k,k \in \mathcal{R} \cup \mathcal{V}$.
    \item \textbf{Action}: At time slot $t$, the RSU generates $p_j^t\in\left[c_j,p^{max}\right]$, according to the information $o_j^t$, where $p^{max}$ is the maximum value of a pricing hyperparameter and $c_j$ is the time spent by all data transferring. Furthermore, AVs generate $b_i^t\in[0,+\infty)$ according to the information $O_i^t$. To brief expression, we integrate $p^t$ and $b^t$ as $a^t_k,k \in \mathcal{R} \cup \mathcal{V}$.
    \item \textbf{Reward}: The RSUs and AVs interact with the environment based on their strategies, generating rewards that combine immediate operational outcomes and long-term sustainability metrics, enabling iterative optimization of cooperative vehicular intelligence.
\end{enumerate}

The MABLPPO algorithm is a variant of the multi-agent based Actor-Critic framework. This approach offers a more effective means of addressing the inherent challenges of collaboration and competition between RSUs and AVs while facilitating the attainment of an optimal equilibrium solution. The actor network is responsible for generating an optimal strategy in the current state based on the current observed and previous information. The loss function of the actor model in MABLPPO is formulated as 
    \begin{equation}\label{lossFunction}
        \begin{split}
        L_{\mathrm{actor}}(\theta^k)=\mathbb{E}  \bigg[\min  \bigg(\frac{\pi_{\theta^k}(a^k|s^k)}{\pi_{\theta^k}^{old}(a^k|s^k)}\hat{A}_{\pi_{\theta^k}}(a^k,p^k) \\
        \text{clip}(\frac{\pi_{\theta^k}(a^k|s^k)}{\pi_{\theta^k}^{old}(a^k|s^k)},1-\epsilon,1+\epsilon)\hat{A}_{\pi_{\theta^k}}(a^k,p^k)\bigg) \bigg].
        \end{split}
    \end{equation}
    
The importance sampling ratio $\frac{\pi_{\theta^k}(a^k|s^k)}{\pi_{\theta^k}^{old}(p^k|s^k)}$ is a measure of the relative change in actions generated by the new policies $\pi_{\theta^k}(a^k|s^k)$ and old policies $\pi_{\theta^k}^{old}(a^k|s^k)$ in the current state for the agent $k$. ${\hat{A}}_{\pi_{\theta k}}(s^k,a^k)$ is an advantage function that is used to motivate policy updates.

The Critic network uses the Temporal Difference (TD) error to measure the difference between the current and expected states. This mechanism is an essential part of the Actor-Critic algorithmic framework. The TD error function is defined as 
    \begin{equation}
        d^k=r^k(t)+\gamma V_{\omega^k}(s^k(t+1))-V_{\omega^k}\big(s^k(t)\big),
    \end{equation}
where $d^k$ denotes the TD error of the agent $k$, $r^k(t)$ denotes the current reward of the agent $k$ at timestep $t$, $\gamma$ denotes the discount factor and $V_{\omega^k}$ denotes the value function of agent $k$. 
The loss function of the critic network in MAPPO is 
    \begin{equation}
        L_{critic}(\omega^k)=\min\mathbb{E}\left[\left(d^k\right)^2\right].
    \end{equation}

\subsection{Computation-aware Pruning Algorithm for Efficient MADRL}
In the context of a connected vehicle environment, where computing resources are constrained and heterogeneous and where latency requirements are exacting, personalized pruning can be employed to reduce the number of parameters in a model and thereby reduce the time taken for an agent to make its strategy. Personalized pruning provides suitable models for different computing platforms and includes two steps: 
\begin{enumerate}
    \item Evaluating the computation capability of the vehicle to choose the optimal pruning rate.
    \item Using the PX algorithm \cite{iurada2024finding} to prune the actor network with optimal density.
\end{enumerate}
After pruning the actor network, the inference speed is accelerated, and the computation resource demand is lower.

\subsubsection{Path eXclusion algorithm}
This section presents an introduction to the PX pruning algorithm. Firstly, the agent network is defined as
\begin{equation}
    \mathcal{Q}:\mathbb{R}^d\rightarrow \mathbb{R}^k,
\end{equation}
where $\mathbb{R}^d$ denotes the input vector of the agent network and $\mathbb{R}^k$ denotes the output vector of the agent network.

Unlike traditional supervised learning, reinforcement learning generates training data through dynamic interactions between an optimized policy model and its environment. Hence, we use the optimal policy to generate a dataset of $N$ data points for the pruning algorithm, in which agent observations serve as input features and corresponding actions constitute output labels. The dataset is denoted as $(X_{Obs}, Y_{Act}) = \{(x_i, y_i)\}^N_{i=1}$, which $x_i \in \mathbb{R}^d$ and $y_i \in \mathbb{R}^k$. The problem of an unstructured network pruning can be formulated as a binary mask $M\in\{0,1^m\}$.

\begin{equation}
    \begin{split}
    \textbf{\textit{P3:}}\:&\min_M\frac1N\sum_{i=1}^N\mathcal{L}\big(\mathcal{Q}(\boldsymbol{x}_i;\mathcal{A}(\boldsymbol{\theta},\boldsymbol{M})\odot \boldsymbol{M}),\boldsymbol{y}_i\big) \\
    &\:\: {\mathrm{s.t.~}\boldsymbol{M}\in\{0,1\}^m,\parallel \boldsymbol{M}\parallel_0/m\leq1-q},
    \end{split}
\end{equation}
where $\mathcal{L}$ is the loss function for the downstream task, $q$ denotes the desired sparsity for the target network, $\mathcal{A}$ denotes the actor network algorithm, $\boldsymbol{\theta}$ is the parameter of the actor network and $\odot$ denotes the Hadamard product.
	
Once the score has been calculated, only the top-$S$ mask elements will be retained for use in accordance with the aforementioned formula. The saliency function is defined as
\begin{equation}
    S_{PX}(\boldsymbol{x},\boldsymbol{\theta},\boldsymbol{a})=\frac{\partial R(\boldsymbol{x},\boldsymbol{\theta},\boldsymbol{a})}{\partial\boldsymbol{\theta}^2}\odot\boldsymbol{\theta}^2.
\end{equation}

The saliency function is now subjected to analysis. Initially, the model output is optimized through the utilization of a first-order Taylor expansion, thereby approximating the model output at the preceding time step $t$ is formulated as 
\begin{equation}
    \mathcal{Q}(\boldsymbol{x},\boldsymbol{\theta}_{t+1})=\mathcal{Q}(\boldsymbol{x},\boldsymbol{\theta}_t)-\alpha\Theta_t(\boldsymbol{x},\boldsymbol{x})\nabla_{\mathcal{Q}}\mathcal{L},
\end{equation}
where $\Theta_t(\boldsymbol{x},\boldsymbol{x})=\nabla_\theta \mathcal{Q}(\boldsymbol{x},\boldsymbol{\theta}_t)\nabla_\theta \mathcal{Q}(\boldsymbol{x},\boldsymbol{\theta}_t)^T\in\mathbb{R}^{NK\times N K}$ is the Neural Tangent Kernel (NTK) in the time slot $t$.

Recent studies demonstrate that the NTK captures the training dynamics of deep neural networks across frameworks, with the eigenvector corresponding to its largest eigenvalue critically influencing model convergence rates \cite{liu2024ntk}. This intrinsic property positions NTK-derived metrics as powerful theoretical tools for quantitatively evaluating model performance, particularly in comparing architectural advantages. Leveraging NTK's mathematical universality and empirical versatility, it provides a robust method for computing node importance weights in the MAPPO algorithm's actor networks in decision-making systems.

We consider that all network paths connect input and output neurons $\mathcal{P}=\{1,\ldots,p,\ldots,P\}$. The presence of weight $\boldsymbol{\theta}_i$ in a path is $p$ indicated by the symbol $p_i=\mathbb{I}[\boldsymbol{\theta}_i\in p]$, and the product of the weights of a path is $\upsilon_p(\boldsymbol{\theta})=\prod_{i=1}^{m}\boldsymbol{\theta}_i^{p_i}$. When inputting $\boldsymbol{x}\in X$, the activation status of a path is given by the expression $a_p(\boldsymbol{x},\theta)=\prod_{{i|\theta_i\in p}}\mathbb{I}[z_i>0]$, where is the activation of the neuron $z_i$ connected to the previous layer through $\theta_i$. Consequently, the output function of the first layer can be expressed as
\begin{equation}
    \mathcal{Q}^k(\boldsymbol{x},\boldsymbol{\theta})=\sum_{s=1}^{d}\sum_{p\in\mathcal{P}_{s\rightarrow k}} v_p(\boldsymbol{\theta})a_p(\boldsymbol{x},\boldsymbol{\theta})\boldsymbol{x}_s,
\end{equation}
where $\boldsymbol{x}_s$ denotes the $s$-th term of the $\boldsymbol{x}$ vector and $\mathcal{P}_{s\rightarrow k}$ is the set of all paths from input $s$ to output neuron $k$. 

Factorize NTK according to the matrix chain rule, which can be calculated as
\begin{equation}
\begin{aligned}
    \Theta(X,X)& =\nabla_{\boldsymbol{\theta}} \mathcal{Q}(X,{\boldsymbol{\theta}})\nabla_{\boldsymbol{\theta}} \mathcal{Q}(X,\boldsymbol{\theta})^T, \\
    &=J_v^\mathcal{Q}(X)\Pi_{\boldsymbol{\theta}}(J_v^\mathcal{Q}(X))^T,
\end{aligned}
\end{equation}
where $J_v^\mathcal{Q}(X)\in\mathbb{R}^{NK\times P}$ is called Path Activation Matrix. According to the Forbenius norm, we have rewritten NTK as
\begin{equation}
\begin{aligned}
    Tr[\Theta(X,X)]& =Tr[\nabla_ {\boldsymbol{\theta}} \mathcal{Q}(X,{\boldsymbol{\theta}})\nabla_{\boldsymbol{\theta}} \mathcal{Q}(X,{\boldsymbol{\theta}})^T], \\
    &\leq\parallel J_v^\mathcal{Q}(X)\parallel_F^2\cdot\parallel J_{\boldsymbol{\theta}}^v\parallel^2,
\end{aligned}
\end{equation}
where$\parallel J_{\boldsymbol{\theta}}^v\parallel_F^2=\sum_{p=1}^{P}\sum_{j=1}^{m}\left(\frac{v_p\left({\boldsymbol{\theta}}\right)}{{\boldsymbol{\theta}}_j}\right)^2$ and $J_{\boldsymbol{\theta}}^v\in\mathbb{R}^{P\times m}$ is called Path Kernel Matrix.

Two additional models, exhibiting the same structural characteristics as the primary model, are then employed to accelerate the computation process and enhance the parallelism of the calculation, which reduces the latency of pruning.
\begin{equation}
    h^k(1,{\boldsymbol{\theta}}^2,1)=\sum_{p=1}^{P} v_p({\boldsymbol{\theta}}^2)=\sum_{p=1}^{P} v_p^2({\boldsymbol{\theta}}),
\end{equation}
\begin{equation}
    g^k(\boldsymbol{x}^2,1,a)=\sum_{p=1}^{P} a_p(\boldsymbol{x},{\boldsymbol{\theta}})\boldsymbol{x}_{s|s\in p}^2.
\end{equation}

The variable “$1$” refers to an all-one vector. The outputs of the two networks
$\mathcal{R}(\boldsymbol{x},{\boldsymbol{\theta}},a)=\sum_{n=1}^{N}\sum_{k=1}^{K} g^k(\boldsymbol{x}_n^2,1,a_n) h^k(1,{\boldsymbol{\theta}}^2,1)$. Thus, the saliency function can be calculated in parallel, which is denoted as 

\begin{equation}
    S_{\mathrm{PX}}(\boldsymbol{x},{\boldsymbol{\theta}},a)= \parallel J_v^\mathcal{Q}(X)\parallel_F^2\cdot\parallel J_{\boldsymbol{\theta}}^\mathcal{Q}\parallel_F^2\odot{\boldsymbol{\theta}}^2.
\end{equation}

\subsubsection{Computation awareness}
The main idea of this section is to introduce a method for AVs with dynamic adaptation of computing resources. It is dependent on the GPU benchmark because the TMABLPPO model costs lots of GPU resources. The latency of the model relied not only on the AI Trillions or Tera Operations per Second (TOPS) but also on the GPU memory. Thus, we divided three performance classes for the AVs. It can avoid the problem of all AVs using the same model, as those with limited computing resources may cause longer processing delays when executing the model. To address this challenge, we use a computation-aware pruning algorithm, which can prune intelligently for different kinds of AVs. This is achieved so that each AV can run the suitable model according to its computing resources.

To address meaningful performance disparities among automotive cockpit platforms, we employ a tiered classification system relative to TOPS computation capability, defining $\mathcal{K}$ hierarchical intervals: $\{(0, C_1), [C_1, C_2),..., [C_{k-1}, C_k),..., [C_K, +\infty)\}$, which $\mathcal{K}=\{1, ..., k, ... K\}$. This system enables dynamic reconfiguration of neural network pruning rates across various computation tiers, ensuring balanced optimization between system efficiency and processing latency. The adaptive module integrated employs performance monitoring to compute the optimal pruning ratios automatically, ensuring systems remain responsive while utilizing all available computation resources within each platform's respective capability range. This tier-aware optimization feature optimizes the existing gaps in performance disparity by adjusting model compression techniques to hardware capacity limits. The detailed algorithm of TMABLPPO is present in \textbf{Algorithm }\hyperref[algorithm1]{\textbf{1}}.

\begin{algorithm}[t]
\caption{TMABLPPO-based Solution for MLMF Stackelberg Game}
\label{algorithm1}
\begin{algorithmic}
    \State Initialize maximum price $P_{max}$, maximum bandwidth $B_{max}$, batch size $bs$, maximum episodes $E$, maximum time slots $T$ in one episode;
    \For{Agent $k \in \mathcal{R} \cup \mathcal{V}$}
        \State Initialize actor $\pi_{\boldsymbol{\theta}_k^a}$, $\pi_{\boldsymbol{\theta}_k^a}^{old}$, critic $Q_{\boldsymbol{\theta}_k^c}$;
    \EndFor
    \For{Episode $1, 2, ... ,E$}
        \State Reset Stackelberg game environment state $S_0$ and clear 
        \State up the replay buffer $\mathcal{B}$;
        \For {Time slot $t = 1, 2, ... ,T$}
            \State Input $o^t_r$ into RSU $r$ actor policy $\pi_{\boldsymbol{\theta}_r^a}$;
            \State Infer the current price strategy $p^t_r$;
            \State Input $o^t_v$ into AV $v$ actor policy $\pi_{\boldsymbol{\theta}_v^a}$;
            \State Infer the current bandwidth amount strategy $b^t_v$;
            \State Update $S_t$ to $S_{t+1}$
            \State Calculate rewards $R^t_r$ for RSU $r$ and $R^t_v$ for AV $v$;
            \State Store transition $(o^t_r,o^t_v,R^t_r,R^t_v)$ into $\mathcal{B}$;
            \If {$t \% |bs| == 0$}
                \State Sample batch data from $\mathcal{B}$;
                \State Process batch data with advantage estimation;
                \State Calculate the loss using Eq. (\ref{lossFunction});
                \State Update $\theta_k^a$, $\theta_k^c$ using the loss;
            \EndIf
        \EndFor
    \EndFor
    \For{Agent $v \in \mathcal{V}$}
        \State Evaluate the computation resources of AV $v$;
        \State Initialize the pruning rate of the actor model $pr$;
        \State Find the optimal pruning rate $pr$ according to computation resources;
        \State Prune the actor model $\pi_{\boldsymbol{\theta}_v^a}$ with $pr$;
        \State Finetune the actor model $\pi_{\boldsymbol{\theta}_v^a}$;
    \EndFor

\end{algorithmic}
\end{algorithm}

\begin{figure*}
    \centering
    \begin{minipage}{0.32\textwidth}
        \centering
        \includegraphics[width=\linewidth]{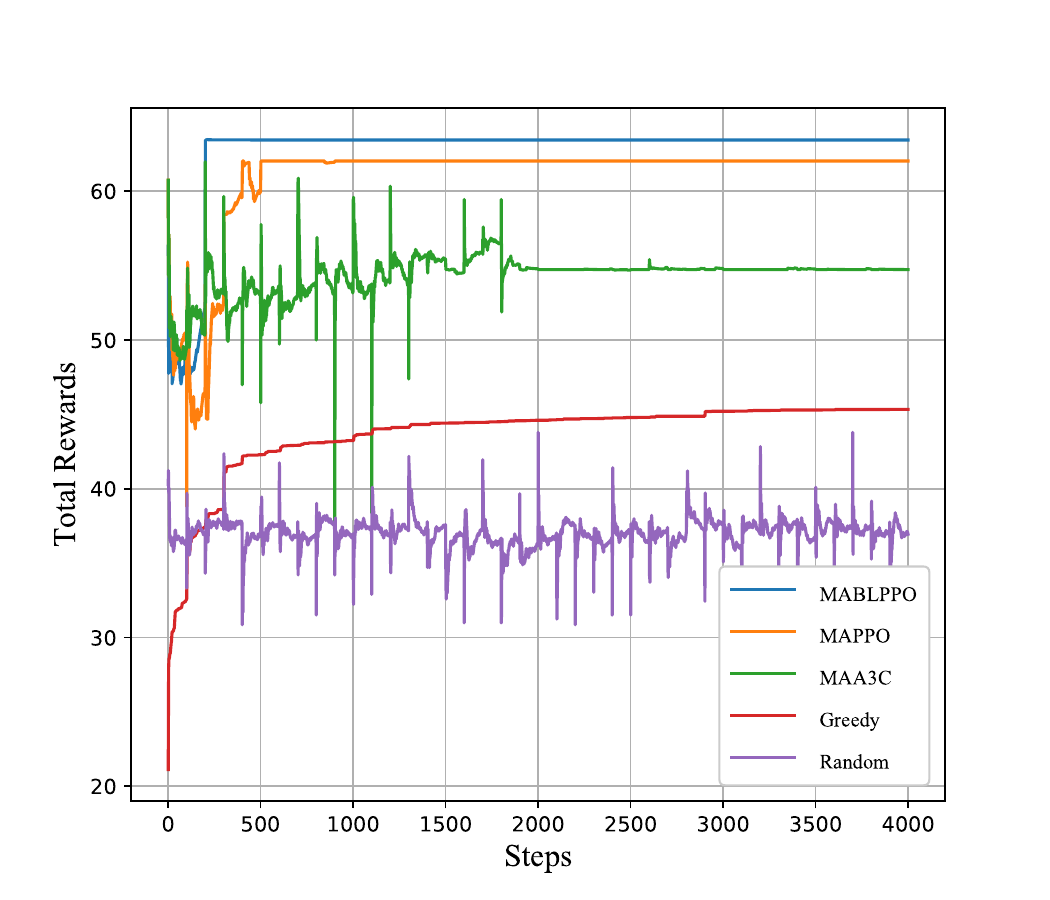}
        \caption{Comparison of the total reward curves of MABLPPO and baselines for the Stackelberg Game.}
        \label{before_prune}
    \end{minipage}
    \hfill
    \begin{minipage}{0.32\textwidth}
        \centering
        \includegraphics[width=\linewidth]{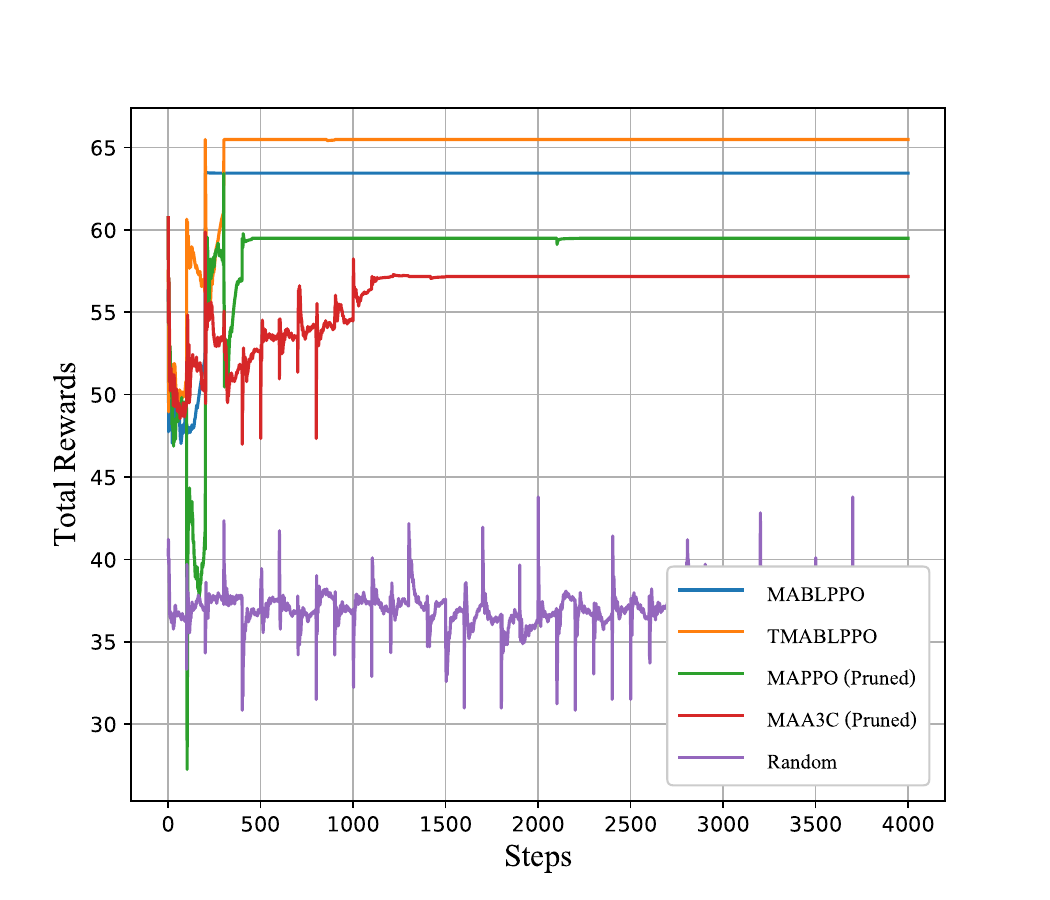}
        \caption{Comparison of the total reward curves of TMABLPPO and baselines for the Stackelberg Game with 90\% density.}
        \label{after_prune}
    \end{minipage}
    \hfill
    \begin{minipage}{0.32\textwidth}
        \centering
        \includegraphics[width=\linewidth]{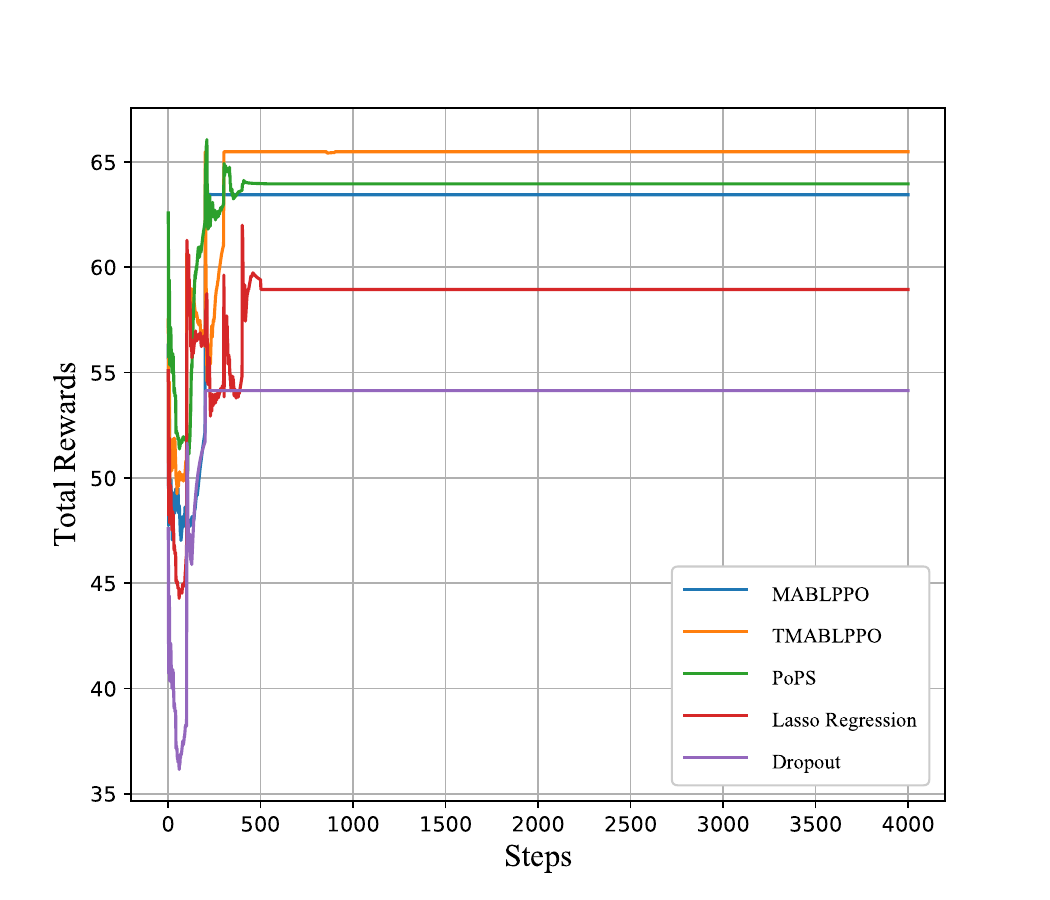}
        \caption{Comparison of the total reward curves of TMABLPPO and other pruning algorithms for the Stackelberg Game with 90\% density.}
        \label{compare_prune_algorithm}
    \end{minipage}
\end{figure*}

\begin{figure*}
    \centering
    \begin{minipage}{0.32\textwidth}
        \centering
        \includegraphics[width=\linewidth]{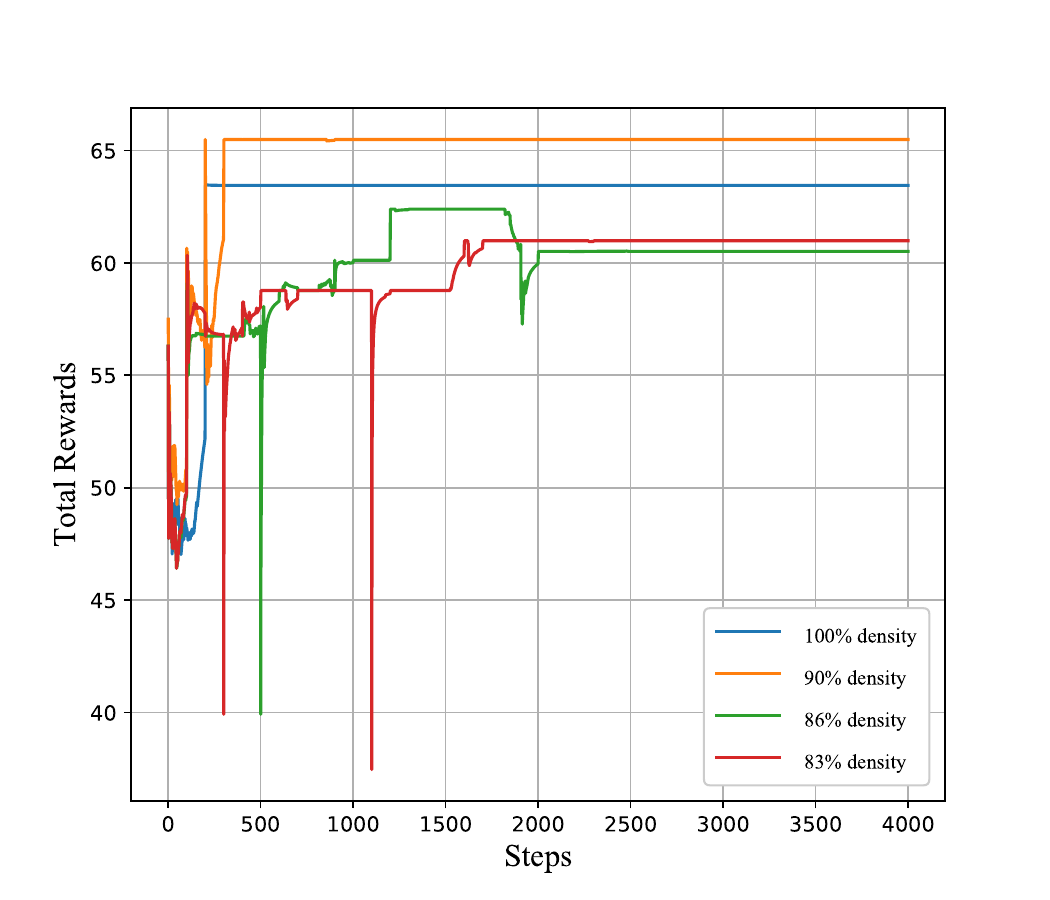}
        \caption{Relation between the total reward and density in the trivial pruning rate threshold.}
        \label{trivial}
    \end{minipage}
    \hfill
    \begin{minipage}{0.32\textwidth}
        \centering
        \includegraphics[width=\linewidth]{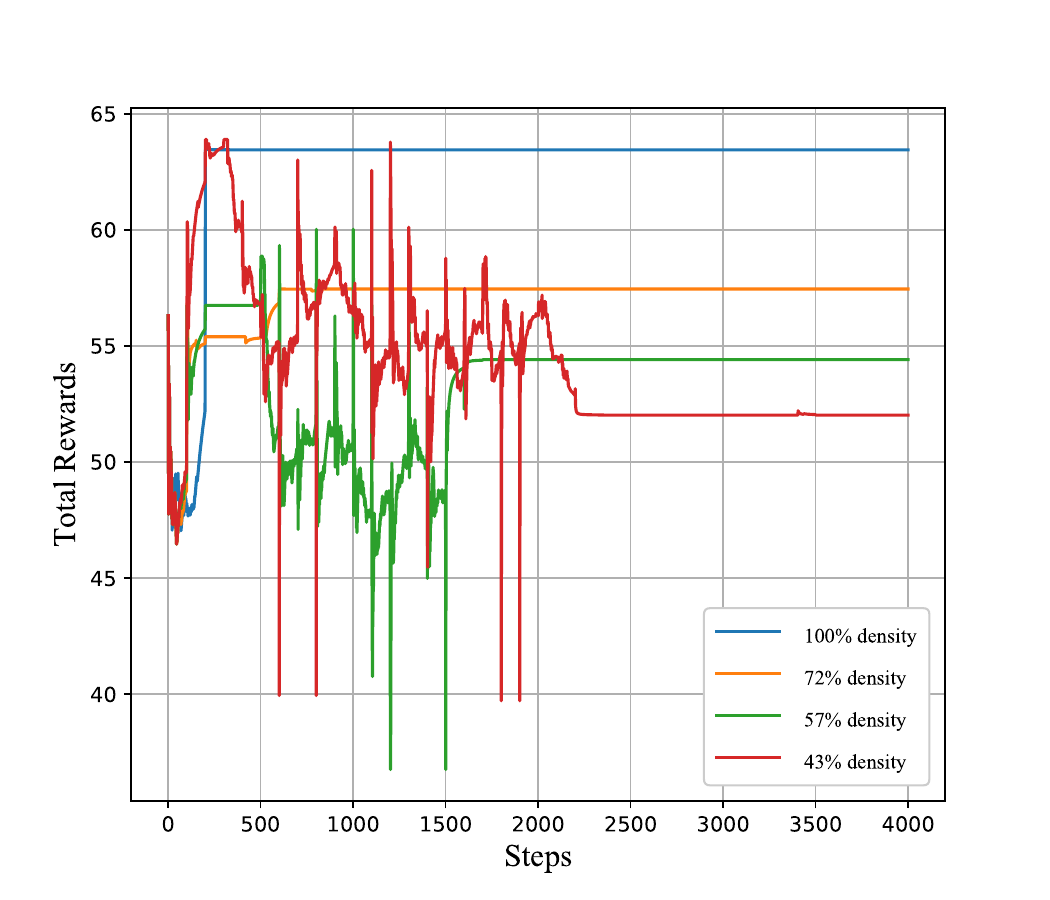}
        \caption{Relation between the total reward and density in the mild pruning rate threshold.}
        \label{mild}
    \end{minipage}
    \hfill
    \begin{minipage}{0.32\textwidth}
        \centering
        \includegraphics[width=\linewidth]{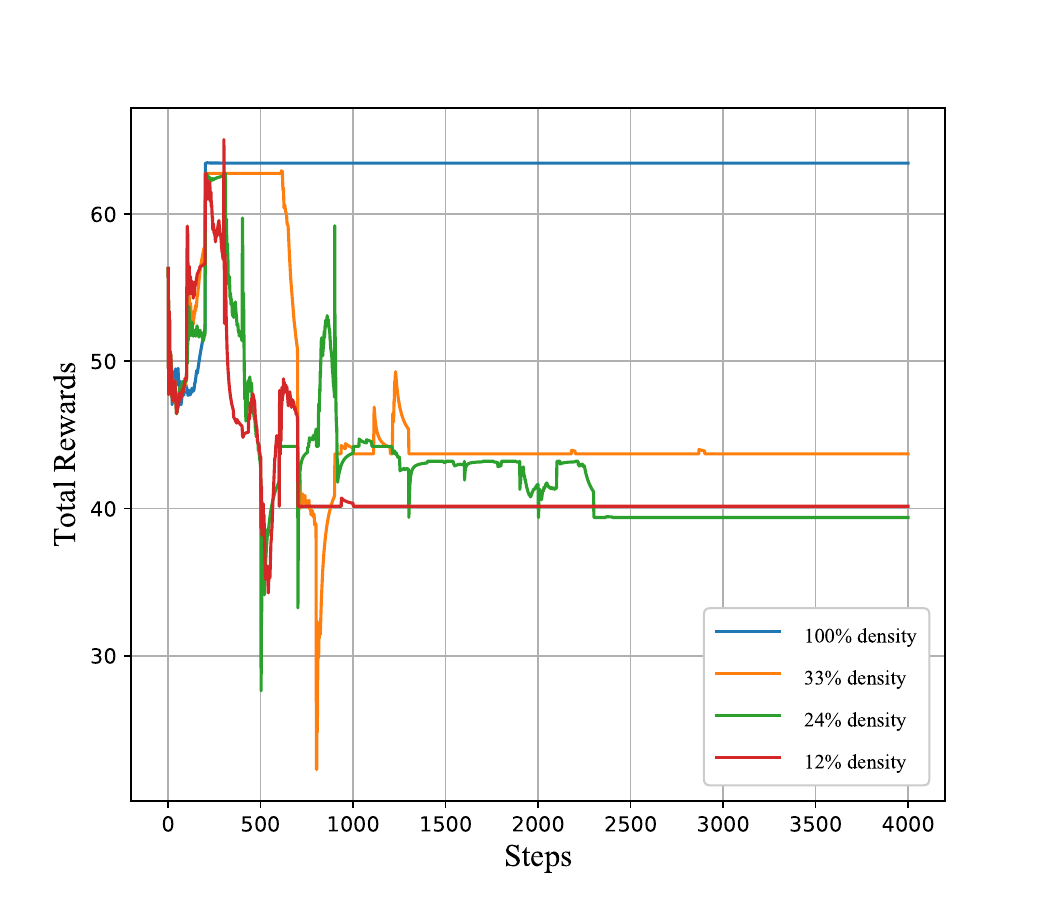}
        \caption{Relation between the total reward and density in the extreme pruning rate threshold.}
        \label{extreme}
    \end{minipage}
\end{figure*}

\section{Numerical Results} \label{numerical_results}
This section demonstrates the effectiveness of the proposed framework by experiments. All experiments are conducted on an NVIDIA Jetson Orin Nano Developer Kit embedded platform running PyTorch 2.3.0 framework within an Ubuntu 22.04 LTS operating environment. For the convergence analysis, we compare our MABLPPO algorithm with three baseline algorithms, which are the MAPPO \cite{yu2022surprising}, the Asynchronous Advantage Actor Critic algorithm for the multi-agent environment (MAA3C) \cite{mnih2016asynchronous} and the random strategies for all players to act. Our experiments divide AVs into three intervals according to the computation resources. We set the pruning rate threshold for each tier of AVs as shown in Table \ref{table1}.


\begin{table}[h]
    \centering 
    \caption{Computation Resource-Density Threshold Correspondence for AVs}
    \label{table1}
    \begin{tabular}{cc} 
        \toprule 
        \textbf{AVs' computation resources} & \textbf{Density threshold} \\
        \midrule 
        High & $(80\%, 100\%]$ \\
        Medium & $(40\%, 80\%]$ \\
        Low & $(0\%, 40\%]$ \\
        \bottomrule 
    \end{tabular}
\end{table}

Figure \ref{before_prune} evaluates the impact of different DRL algorithms on the total rewards. According to the total rewards, MABLPPO achieves the highest total rewards compared to the baseline algorithms, demonstrating that our algorithm is effective in optimizing the bandwidth resource allocation. The Bi-LSTM module can dynamically explore the rules in the time-series data. Furthermore, the curves in Fig. \ref{before_prune} show that MABLPPO can find the optimal strategies faster than others, which spend fewer resources to train the model.

Figure \ref{after_prune} illustrates the impact of the different DRL algorithms pruned by the PX algorithm with 90\% density on the total rewards. After pruning, the performance of TMABLPPO is better than that of the models without pruning. This indicates that the PX algorithm can eliminate less critical neural connections while preserving essential computation pathways.

Figure \ref{compare_prune_algorithm} presents the comparison of pruning algorithms on MABLPPO model, demonstrating that PX attains greater cumulative rewards at 90\% model density than PoPS, Lasso Regression, and Dropout algorithms. The performance plots demonstrate that the models pruned by PX and PoPS algorithms perform better than the baseline model in cumulative rewards. In comparison, other algorithms exhibit lower performance than the baseline model. PX is especially effective at preserving salient neural pathways, allowing for not only more reward harvesting but also quicker fitting than conventional regularization-based pruning approaches.


Figure \ref{trivial} presents the impact of different pruning rates of TMABLPPO models on the total rewards in the trivial pruning rate. The results reveal that the model with 90\% density has higher total rewards than other density settings. The curves in Fig. 6 show that the stability of the model decreased after pruning, demonstrating that the lower density may break the structure of the model.


Comparative analysis in Fig. \ref{mild} examines moderate pruning rates, where the 72\% density model shows 13.9\% lower total rewards than its 90\% density counterpart. This performance degradation suggests moderate pruning may inadvertently remove functionally significant neural connections. The curves in Fig. \ref{mild} show that the observed reduction in learning efficiency implies compromised information processing capacity.

 
Figure \ref{extreme} explores extreme pruning scenarios, where even at 33\% density, the model retains 68\% of its original performance. This demonstrates remarkable network resilience to aggressive parameter reduction, though the significant performance gap from baseline models emphasizes the critical balance required between the accuracy and latency of the models across different pruning intensities.

\begin{figure} [t]
    \centering{
    \includegraphics[width=0.8\linewidth]{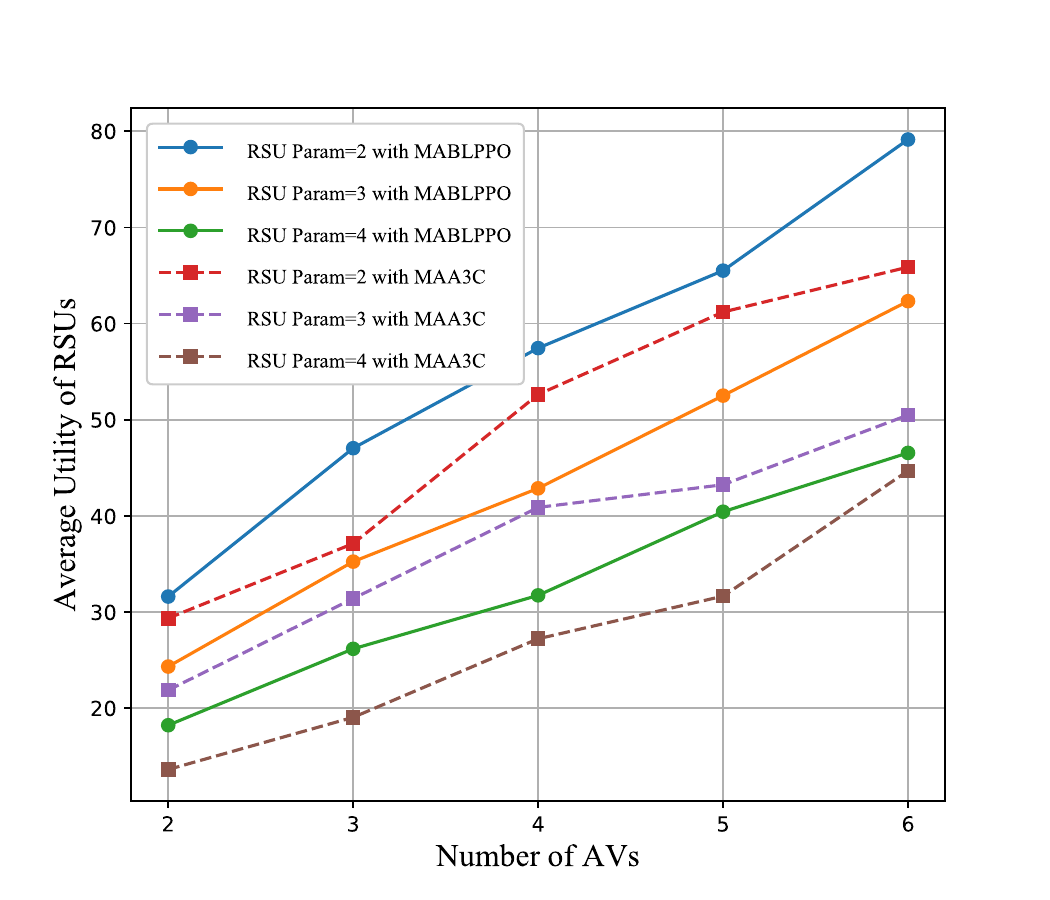}}
    \caption{Average utility of the RSUs with MABLPPO and MAA3C algorithms.}
    \label{fig:rsu_with_a3c_mar26}
\end{figure}

\begin{figure} [t]
    \centering{
    \includegraphics[width=0.8\linewidth]{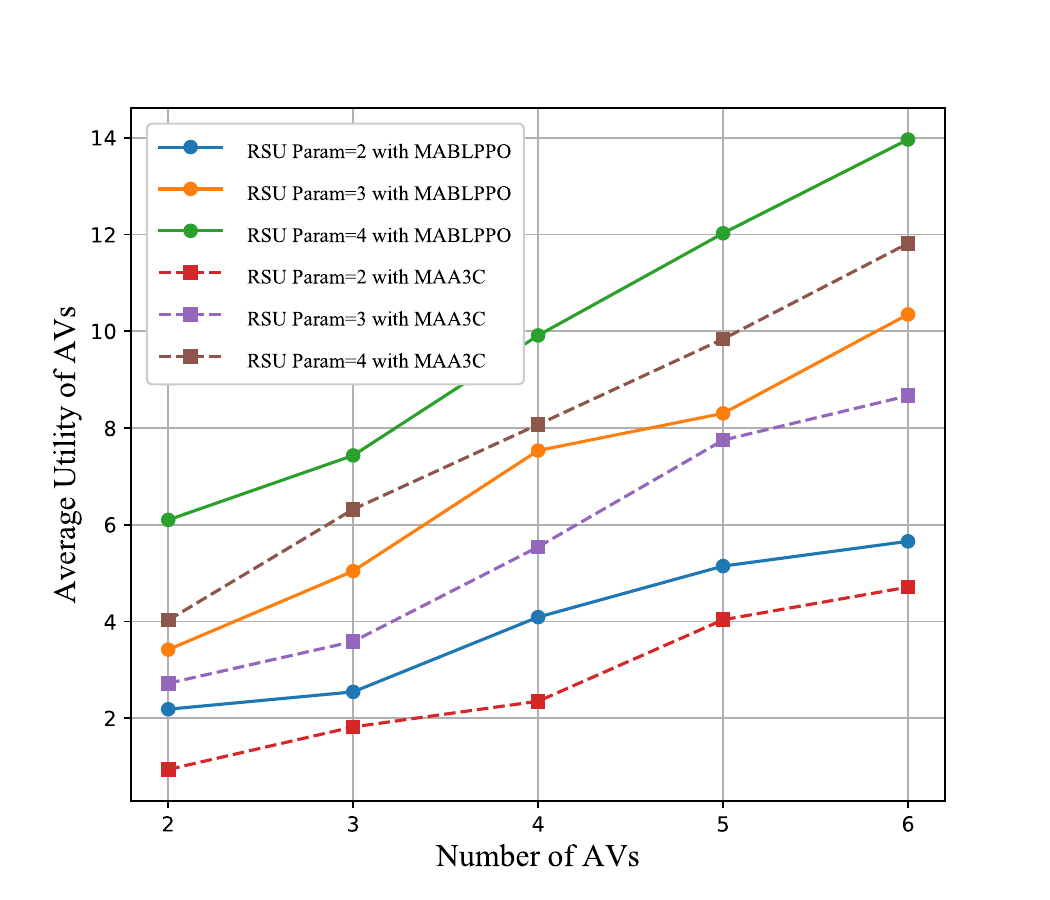}}
    \caption{Average utility of the AVs with MABLPPO and MAA3C algorithms.}
    \label{fig:av_with_a3c_mar26}
\end{figure}

Figure \ref{fig:rsu_with_a3c_mar26} illustrates the influences on the average utility of RSUs. When maintaining a fixed AV count, the average utility of RSUs has an inverse relationship with the number of RSUs. Compared to MAA3C, MABLPPO has better strategies to enhance the utility of RSUs. According to the social effect, the increasing number of RSUs alleviates the lack of bandwidth. Hence, the utilities of RSUs decline. On the other hand, we consider the fixed RSU count, in which the average utility of RSUs grows with the number of AVs.  According to the resource dilution model, the number of AVs increases, compelling intensified bandwidth competition. This competition drives the AVs’ increasing costs to acquire the bandwidth for VEAAT migration, which raises the utilities of RSUs.

Figure \ref{fig:av_with_a3c_mar26} demonstrates the impact of network resource allocation on AVs' average utility, where the MABLPPO algorithm exhibits strategic optimization for AVs. With fixed RSU deployment, the total rewards increase with AV population growth as vehicles actively acquire additional bandwidth to enhance VEAAT migration services, facilitated by sufficient RSU bandwidth capacity. Conversely, under constant AV density, the aggregate utility rises with RSU quantity expansion, directly attributable to the amplified total available bandwidth resources provided by the augmented RSU infrastructure.

\section{Conclusion} \label{conclusion}
In this paper, we studied the service migration issues in vehicular embodied AI networks. We designed a novel algorithm called TMABLPPO for the vehicular embodied AI Twins migration problem in complex traffic environments. We first formulated the resource allocation problem as an MLMF Stackelberg game considering with QoS between AVs and RSUs.  Subsequently, we proposed an enhanced MADRL algorithm based on Bi-LSTM to improve the utilization of historical data. Furthermore, considering the differences in the AVs computation resources, we compressed the actor models by the computation-aware pruning algorithm to balance their latency and performance. Numerical results demonstrate that our proposed approach exhibits notable advantages in terms of performance and latency. In future work, we will focus on optimizing the construction of the embodied AI systems, which integrate the state-of-the-art few-shot learning techniques with large-scale models to solve the diversity problems in intelligent transportation systems.

\bibliographystyle{IEEEtran}

\bibliography{ref}

\end{document}